\def\R{\mathbb{R}}
\def\Problem{\textsc{Minimum Dominating Tree}}
\def\Prob{MDT}
\def\ProblemStar{\textsc{Minimum Dominating Star}}
\def\ProbStar{\textsc{MDS}}
\def\ProblemGroup{\textsc{Group Steiner Tree}}
\def\ProbGroup{\textsc{GST}}
\def\ProblemDom{\textsc{Minimum Dominating Set}}
\def\ProbDom{\textsc{DOM}}
\def\ProblemSetCover{\textsc{Minimum Set Cover}}
\def\ProbSetCover{\textsc{SC}}
\def\ProblemPath{\textsc{Minimum Dominating Path}}
\def\ProbPath{\textsc{MDP}}
\def\ProblemHam{\textsc{Hamiltonian Path}}
\def\ProbHam{\textsc{HP}}
\tikzset{label/.style={
	draw=none,
	sloped,
}}
\tikzset{label above/.style={
	label,
	midway,
	above=-1mm,
}}
\tikzset{label below/.style={
	label,
	midway,
	below=-1mm,
}}
\tikzset{
	>=Latex
	,every picture/.style={
		very thick
	}
	,every node/.style={
		draw 
		,circle
		,inner sep=0mm
		,minimum size=5mm
		,very thick
		,font=\small
	}
}
\def\titletext{Hardness Results and Approximation Algorithms for the Minimum Dominating Tree Problem}
\title{\titletext}
\author{
Gilad Kutiel\inst{1}
}
\institute{
Department of Computer Science, Technion, Haifa, Israel
\\
\email{gkutiel@cs.technion.ac.il}
}
\date{}
\begin{document}
\maketitle

\begin{abstract}
Given an undirected graph $G = (V, E)$ and a weight function $w:E \to \R$, 
the \Problem{} problem asks to find a minimum weight sub-tree of $G$, 
$T = (U, F)$, such that every $v \in V \setminus U$ is adjacent to at least one 
vertex in $U$.
The special case when the weight function is uniform is known as the 
\textsc{Minimum Connected Dominating Set} problem.

Given an undirected graph $G = (V, E)$ with some subsets of vertices called groups,
and a weight function $w:E \to \R$,
the \textsc{Group Steiner Tree} problem is to find a minimum weight sub-tree
of $G$ which contains at least one vertex from each group. 

In this paper we show that the two problems are equivalents 
from approximability perspective.
This improves upon both the best known approximation algorithm and the best 
inapproximability result for the \Problem{} problem.
We also consider two extrema variants of the \Problem{} problem, namely,
the \ProblemStar{} and the \ProblemPath{} problems 
which ask to find a minimum dominating star and path respectively.  
\end{abstract}

\section{Introduction}
Given an undirected graph $G = (V, E)$ and a weight function $w:E \to \R$, 
the \Problem{} problem (\Prob{}) asks to find a minimum weight sub-tree of $G$, 
$T = (U, F)$, such that every $v \in V \setminus U$ is adjacent to at least one 
vertex in $U$.
The special case when the weight function is uniform is known as the 
\textsc{Minimum Connected Dominating Set} problem (CDS). 
Both CDS and \Prob{} have many applications in routing
for mobile ad-hoc networks, see for 
example~\cite{shin2010approximation,cheng2003polynomial,das1997routing,adasme2016models,adasme2017minimum}.
Figure~\ref{fig:problem} shows an example instance and a possible solution to the problem. 

\begin{figure}
\begin{center}
\begin{tikzpicture}[]
\foreach[count=\i] \x \y in {
	0/0
	,1/2,1/-1
	,2/0
	,3/1,3/-2
	,4/-1
}{
	\node(\i) at(\x,\y) {\i};
}

\foreach \u \v \w in {
	1/2,1/3%
	,2/4,2/5%
	,3/4,3/6,3/7%
	,4/5,4/7%
	,5/7%
}{
	\draw (\u) -- (\v) node[label above]{\w};
}

\begin{scope}[xshift=6cm]
\foreach[count=\i] \x \y in {
	0/0
	,1/2,1/-1
	,2/0
	,3/1,3/-2
	,4/-1
}{
	\node(\i) at(\x,\y) {\i};
}

\foreach \u \v \w in {
	1/2%
	,2/4%
	,3/4%
	,4/7%
}{
	\draw (\u) -- (\v) node[label above]{\w};
}

\foreach \u \v in {
	1/3%
	,2/5%
	,3/6,3/7%
	,4/5%
	,5/7%
}{
	\draw[thin, dashed] (\u) -- (\v);
}
\end{scope}
\end{tikzpicture}
\end{center}
\caption{\label{fig:problem}
From left to right
a) An instance of the \Problem{} problem.
b) A possible solution (solid edges) with value of 17.
}
\end{figure}
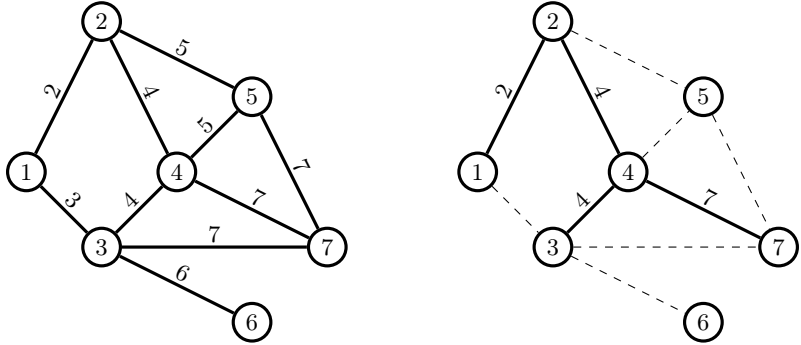

Given an undirected graph $G = (V, E)$ with some subsets of vertices called groups,
and a weight function $w:E \to \R$,
the \ProblemGroup{} problem (\ProbGroup{}) is to find a minimum weight sub-tree
of $G$ which contains at least one vertex from each group.
\ProbGroup{} is not approximable within $ \Omega(\log^{2\textbf{-}\epsilon}n)$
unless NP admits quasi-polynomial-time Las-Vegas
algorithm~\cite{halperin2003polylogarithmic}.
On the other hand, there is a $\log^3 n$-approximation algorithm for the 
problem~\cite{garg2000polylogarithmic}.

In this paper we show that the two problems are equivalents 
from approximation algorithms perspective.
This improves upon both the best known approximation algorithm and the best 
inapproximability result for \Prob{}.
We also consider two extrema variants of \Prob{}, namely,
the \ProblemStar{} (\ProbStar{}) and the \ProblemPath{} (\ProbPath{}) 
problems which ask to find a minimum dominating star and path respectively.  

\textbf{Previous Work:}
CDS has a long history starting at the late 70s~\cite{sampathkumar1979connected},
and it is approximable within $\ln\Delta + 3$~\cite{guha1998approximation} 
where $\Delta$ is the maximum degree in $G$,
which is the best one can wish for if $P \neq NP$. 

\Prob{}, to the best of our knowledge, was introduced in~\cite{shin2010approximation}.
In the same paper it was shown that the \textsc{Minimum Weighted Dominating Set} problem 
can be reduced to \Prob{} in a way that preserve the approximation ratio.
Thus, there is no $c\log n$-approximation algorithm, for some $c > 0$, for \Prob{}
unless $P = NP$.
In the same paper it was shown that \Prob{} can be reduced to the 
\textsc{Minimum Directed Steiner Tree} problem in a way that preserve the approximation ratio.
Unfortunately, the current best approximation algorithm for the 
\textsc{Minimum Directed Steiner Tree} problem yields a $|S|^\varepsilon$ 
approximation ratio~\cite{charikar1999approximation},
where $S$ is the set of terminals.
To the best of our knowledge, this is the best approximation algorithm known for \Prob{}.
The existence of a dominating path in a graph was studied in several 
papers see~\cite{broersma1988existence, faudree2017minimum} for example,
but, to the best of our knowledge, it was never considered from algorithmic perspective.

\textbf{Our Result:}
We show that the \Problem{} problem is equivalent to the \ProblemGroup{} problem
from approximability perspective, 
by doing so we prove that \Prob{} is inapproximable within $\log^{2 - \varepsilon} n$   
unless $NP \subseteq \text{ZTIME}(n^{\text{polylog}(n)})$.
This also, directly leads to a $\log^2 n \log \Delta$-approximation algorithm, 
where $\Delta$ is the maximum degree in $G$.
We also consider the \ProblemStar{} problem, show that it is inapproximable within
ratio of $c \log n$ for some $c > 0$ and show how to reduce it to the \ProblemSetCover{}
problem to obtain a $O(\log n)$ approximation.
Finally, we consider the \ProblemPath{} problem and show that it is inapproximable at all.

\section{\Problem{}}
Recall that instance of \ProbGroup{} is a tuple $(G, w, \mathcal{G})$,
where $G = (V, E)$ is an undirected graph, $w:E \to \mathcal{R}$ is a weight function, 
and $\mathcal{G} \subseteq 2^V$ is a family of groups of vertices.
A Steiner group tree of is a sub-tree of $G$, $T = (U, F)$
such that $g \cap U \neq \emptyset$ for each $g \in \mathcal{G}$.
The cost of such tree is $\sum_e \in F w(e)$.
Given an instance of \ProbGroup{} we are looking for a minimum cost Steiner group tree
of $G$.

In this section we show that \Prob{} is equivalent to \ProbGroup{} from 
approximability perspective, 
that is every approximation algorithm to one problem yields the same approximation 
ratio for the other problem.
To show this we introduce approximation preserving reductions from \Prob{} to \ProbGroup{}
and vice-versa. 

	\subsection{\Prob{} $\leq_p$ \ProbGroup{}}
We start by showing an approximation-preserving reduction from \Prob{} to
\ProbGroup{}.
Given an instance of \Prob{}, $(G, w)$, where $G = (V, E)$, 
we define an instance of \ProbGroup{},
$(G, w, \mathcal{G})$.
We now have to define $\mathcal{G}$: for each vertex $v \in V$ we define 
$g_v \in \mathcal{G}$ to be $\{v\} \cup N(v)$.
Figure~\ref{fig:prob-leq-group} depicts this transformation.
Note that the number of groups is $n$ and that the size of the largest group is $\Delta$.
Clearly this transformation can be done in polynomial time.
The following two claims show that this is an approximation preserving reduction. 

\begin{figure}
\begin{center}
\begin{tikzpicture}[]
\foreach[count=\i] \x \y in {
	0/0
	,1/2,1/-1
	,2/0
	,3/1,3/-2
	,4/0
}{
	\node(\i) at(\x,\y) {\i};
}

\foreach \u \v in {
	1/2,1/3%
	,2/4,2/5%
	,3/4,3/6,3/7%
	,4/5,4/7%
	,5/7%
	,6/7%
}{
	\draw (\u) -- (\v);
}

\begin{scope}[xshift=6cm]
\foreach[count=\i] \x \y in {
	0/0
	,1/2,1/-1
	,2/0
	,3/1,3/-2
	,4/0
}{
	\node(\i) at(\x,\y) {\i};
}

\foreach \u \v in {
	1/2,1/3%
	,2/4,2/5%
	,3/4,3/6,3/7%
	,4/5,4/7%
	,5/7%
	,6/7%
}{
	\draw (\u) -- (\v);
}
\end{scope}

\draw[dashed, blue]
(2.north) to[out=0,in=90]
(5.east) to[out=-90,in=0]
(4.south) to[out=180,in=-90]
(1.west) to[out=90,in=180]
(2.north)
;

\end{tikzpicture}
\end{center}
\caption{\label{fig:prob-leq-group}
From left to right:
a) A \Prob{} instance (weights are omitted).
b) A corresponding \ProbGroup{} instance on the same weighted graph.
For each vertex $v$ we define a group
that contains its neighborhood to ensure that in any group Steiner tree there is at least
one vertex that dominate $v$.
$g_2$ is marked in the figure with a dashed blue line.  
}
\end{figure}
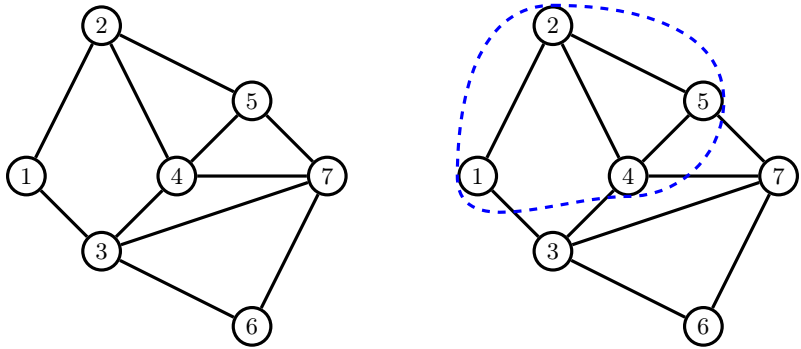

\begin{claim}
Any dominating tree, $T$, in $(G, w)$ 
is a feasible group Steiner tree in $(G, w, \mathcal{G})$.
\end{claim}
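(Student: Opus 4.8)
The plan is to unpack the two definitions and check that a dominating tree satisfies every group constraint. Let $T = (U, F)$ be any dominating tree in $(G, w)$. By construction $T$ is a sub-tree of $G$, so it is automatically a candidate for a group Steiner tree; the only thing to verify is the covering condition $g_v \cap U \neq \emptyset$ for every $v \in V$, where $g_v = \{v\} \cup N(v)$.

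First I would split into the two cases that the definition of a dominating tree naturally provides. If $v \in U$, then $v \in \{v\} \cup N(v) = g_v$ and also $v \in U$, so $v$ itself witnesses $g_v \cap U \neq \emptyset$. If instead $v \in V \setminus U$, then by the defining property of a dominating tree, $v$ is adjacent to some vertex $u \in U$; that is, $u \in N(v) \subseteq g_v$, so $u \in g_v \cap U$ and again the intersection is non-empty.

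Since $v$ was arbitrary, this shows $g \cap U \neq \emptyset$ for every $g \in \mathcal{G}$, which is exactly the feasibility condition for a group Steiner tree in $(G, w, \mathcal{G})$. Note that the weight function is unchanged between the two instances and the edge set $F$ is the same, so the cost of $T$ as a group Steiner tree equals its cost as a dominating tree; this observation is not needed for feasibility but will be used in the companion claim comparing optimal values.

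There is essentially no obstacle here — the statement is a direct consequence of chasing the definitions, and the only thing to be careful about is handling the $v \in U$ case, which is easy to overlook since the dominating-tree condition only explicitly constrains vertices outside $U$. The group $g_v$ was deliberately defined to include $v$ itself precisely so that this case goes through.
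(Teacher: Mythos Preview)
Your proof is correct and follows essentially the same approach as the paper: both arguments unpack the definitions and observe that for every $v$, either $v$ itself or one of its neighbors lies in $U$, so $g_v \cap U \neq \emptyset$. The paper phrases this as a short proof by contradiction while you give the direct case split, but the content is identical.
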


\begin{proof}
Assume for contradiction that $T$ is not feasible group Steiner tree, that is, there is 
a group $g_v$ such that none of the vertices in $g_v$ is spanned by $T$, that is $v$
is not in $T$ nor any of its neighbors and thus $T$ is not a dominating tree - contradiction. 
\end{proof}
 
\begin{claim}
Any feasible group Steiner tree in $(G, w, \mathcal{G})$, $T$, is a dominating tree
in $(G, w)$.
\end{claim}

\begin{proof}
Assume for contradiction that $T$ is not a dominating tree, that is, there is 
a vertex $v$ not in $T$ such that none of its neighbors belong to $T$, thus, 
$T$ does not cover $g_v$ - contradiction. 
\end{proof}
	\subsection{\Prob{} $\geq_p$ \ProbGroup{}}
We now show an approximation-preserving reduction from \ProbGroup{} to
\Prob{}.
Given an instance of \ProbGroup{}, $(G, w, \mathcal{G})$, where $G = (V, E)$, 
we define an instance of \Prob{},
$(G', w')$ where $G' = (V', E')$, and:
\begin{description}
\item[V'] - $V \cup \{g_i : g_i \in \mathcal{G}\}$
\item[E'] - $E \cup \{vg_i : v \in g_i\} \cup (V \times V) \setminus E$
\item[w'] - $
w'(e) = 
\begin{cases}
w(e) 	& e \in E
\\
\infty & \text{otherwise}
\end{cases}
$
\end{description}
Figure~\ref{fig:prob-geq-group} depicts this transformation.
Clearly the above transformation can be done in polynomial time,
The following two claims show that this is an approximation preserving reduction:

\begin{figure}
\begin{center}
\begin{tikzpicture}[]
\foreach[count=\i] \a in {0,72,...,288}{
	\node(\i) at(\a:2) {\i};
}

\foreach \u \v in {
	1/2,2/3,3/4,4/1,4/5,5/1%
}{
	\draw (\u) -- (\v);
}

\draw[dashed, blue]
(2.north) to[out=0,in=0]
(1.south) to[out=180,in=180]
(2.north)
;

\draw[dotted, green]
(1.north) to[out=0,in=0]
(5.south) to[out=180,in=180]
(1.north)
;

\draw[dash dot, orange]
(3.north) to[out=0,in=0]
(4.south) to[out=180,in=180]
(3.north)
;

\begin{scope}[xshift=65mm]
\foreach[count=\i] \a in {0,72,...,288}{
	\node(\i) at(\a:2) {\i};
}

\foreach \u \v in {
	1/2,2/3,3/4,4/1,4/5,5/1%
}{
	\draw (\u) -- (\v);
}

\node(g1) at(31:3){$g_1$};
\node(g2) at(-31:3){$g_2$};
\node(g3) at(180:3){$g_3$};

\foreach \u \v in {
	1/3%
	,2/4,2/5%
	,3/5%
	,g1/1,g1/2%
	,g2/1,g2/5%
	,g3/3,g3/4%
}{
	\draw[dashed] (\u) -- (\v);
}
\end{scope}

\end{tikzpicture}
\end{center}
\caption{\label{fig:prob-geq-group}
From left to right:
a) An instance of \ProbGroup{} (weights are omitted), 
groups are marked by a dashed blue, dotted green, and dashed-dotted lines respectively.  
b) A corresponding \Prob{} instance: we add a vertex for each group and connect it 
to all terminals in the group.
Weights for the original edges remain intact, dashed edges have infinity weight.   
}
\end{figure}
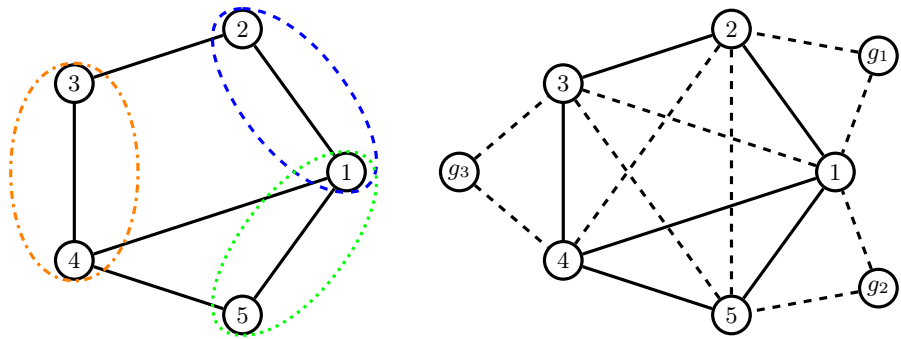

\begin{claim}
Any dominating tree with finite weight, $T$, in $(G', w')$ is a feasible group Steiner tree in 
$(G, w, \mathcal{G})$.
\end{claim}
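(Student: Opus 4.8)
The plan is to show that a finite-weight dominating tree $T = (U, F)$ in $(G', w')$ must, first, use only original edges of $G$, and second, touch every group of $\mathcal{G}$; restricting $T$ to $V$ then yields the desired group Steiner tree. First I would observe that since every edge in $E' \setminus E$ (the group-incidence edges $vg_i$ and the non-edges $(V\times V)\setminus E$) has weight $\infty$, a tree $T$ of finite weight can contain none of them; hence $F \subseteq E$ and in particular $U \subseteq V$, so $T$ is literally a sub-tree of $G$. This is the easy structural step.

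Next I would argue feasibility for $\mathcal{G}$. Fix a group $g_i \in \mathcal{G}$. In $G'$ the vertex $g_i$ is adjacent exactly to the vertices of $g_i$ (via the edges $vg_i$ with $v \in g_i$), and to nothing else — note $g_i$ is not incident to any finite-weight edge. Since $T$ is a dominating tree in $G'$, the vertex $g_i$ is either in $U$ or has a neighbor in $U$; but $g_i \notin U$ because $U \subseteq V$ from the previous step, so some neighbor of $g_i$ lies in $U$, i.e. some $v \in g_i$ lies in $U$. Thus $g_i \cap U \neq \emptyset$ for every $i$, which is exactly the group-Steiner feasibility condition. Combining the two steps, $T$ viewed inside $G$ is a connected sub-tree meeting every group, hence a feasible group Steiner tree in $(G, w, \mathcal{G})$, and its cost is unchanged since $w'$ agrees with $w$ on $E$.

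I expect the only subtlety — not really an obstacle — to be making precise the claim that the added vertex $g_i$ has no finite-weight incident edge and therefore cannot itself be a dominator of anything useful: one must check that the edges incident to $g_i$ are precisely those introduced in $E' \setminus E$, which is immediate from the definition $E' = E \cup \{vg_i : v \in g_i\} \cup ((V\times V)\setminus E)$ since $g_i \notin V$. The companion direction (every feasible group Steiner tree in $(G,w,\mathcal G)$ is a finite-weight dominating tree in $(G',w')$), which presumably follows as the next claim, is likewise routine: such a tree already lives in $G \subseteq G'$ with finite cost, and for any vertex not in the tree we need only exhibit a tree vertex adjacent to it in $G'$ — but $(V\times V)\setminus E \subseteq E'$ makes every pair of original vertices adjacent in $G'$, so domination in $G'$ is automatic.
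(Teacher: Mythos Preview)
Your proof is correct and follows essentially the same approach as the paper: first observe that finite weight forces $F\subseteq E$ (hence $U\subseteq V$), then argue that each added vertex $g_i$ must be dominated by some $v\in g_i\cap U$, giving group-Steiner feasibility. Your version is simply more explicit about why the neighbors of $g_i$ in $G'$ are exactly the members of the group $g_i$, a point the paper glosses over; the extra paragraph sketching the companion direction is not needed for this claim but is also accurate.
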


\begin{proof}
Observe first that $T$ contains only original edges or otherwise its weight is infinite.  
Assume for contradiction that $T$ is not feasible group Steiner tree, that is, there is 
a group $g_i$ such that none of the vertices in $g_i$ is spanned by $T$, that is $g_i$
is not in dominated by any vertex in $T$ - contradiction. 
\end{proof}
 
\begin{claim}
Any feasible group Steiner tree in $(G, w, \mathcal{G})$, $T$, is a dominating tree
in $(G', w')$.
\end{claim}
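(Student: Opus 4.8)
The plan is to show that a feasible group Steiner tree $T = (U, F)$ in $(G, w, \mathcal{G})$, viewed without modification as a subgraph of $G'$, is already a dominating tree in $(G', w')$. First I would observe that $T$ is indeed a subtree of $G'$: since $E \subseteq E'$ and $V \subseteq V'$, every edge of $T$ is present in $G'$, and $T$ is connected and acyclic by hypothesis, so it is a valid candidate solution for \Prob{} on $(G', w')$. Moreover all edges of $T$ lie in $E$, so $w'(T) = w(T)$ is finite; this is needed so that the reduction preserves the objective value and hence the approximation ratio.

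The core step is to verify the domination property: every vertex of $V' \setminus U$ must be adjacent in $G'$ to some vertex of $U$. I would split $V' \setminus U$ into two parts. For an original vertex $v \in V \setminus U$: since $G' \supseteq (V \times V)$ by construction of $E'$ (the edge set includes all non-edges of $G$ as well as $E$), $v$ is adjacent in $G'$ to every other vertex of $V$, and $U$ contains at least one original vertex (a group Steiner tree for a nonempty family of nonempty groups must contain some $v \in V$, assuming groups are nonempty subsets of $V$), so $v$ is dominated. For a group vertex $g_i \in V' \setminus U$: feasibility of $T$ as a group Steiner tree means $g_i \cap U \neq \emptyset$, i.e. there is some terminal $v \in g_i$ with $v \in U$; by construction $vg_i \in E'$, so $g_i$ is adjacent to a vertex of $T$ and is dominated.

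The main obstacle, such as it is, is the bookkeeping around the group vertices $g_i$ themselves: one must make sure that the definition of domination in \Prob{} is applied to the \emph{enlarged} vertex set $V'$, so that the newly introduced $g_i$ vertices also need to be covered, and that this is exactly what group-feasibility of $T$ guarantees. A secondary point worth stating explicitly is the implicit assumption that every group is a nonempty subset of $V$ (otherwise no finite-weight solution exists on either side), which is what lets us conclude $U \cap V \neq \emptyset$ and handle the original vertices via the infinite-weight clique on $V$. Combining this claim with the previous one (finite-weight dominating trees in $(G', w')$ are feasible group Steiner trees in $(G, w, \mathcal{G})$), and noting that both directions preserve cost exactly, yields that any $\rho$-approximation for \Prob{} gives a $\rho$-approximation for \ProbGroup{}, completing the equivalence together with the reduction of the previous subsection.
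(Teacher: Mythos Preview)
Your proof is correct and follows essentially the same approach as the paper: both arguments split the vertices of $V'$ into original vertices (dominated because $V$ is a clique in $G'$) and group vertices $g_i$ (dominated because feasibility of $T$ gives some $v\in g_i\cap U$, and $vg_i\in E'$). Your version is more careful than the paper's on the ancillary points---that $T$ is a valid subtree of $G'$, that $w'(T)=w(T)$, and that $U\cap V\neq\emptyset$ requires the groups to be nonempty---but the core idea is identical.
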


\begin{proof}
First, observe that any original vertex dominate all other original vertices.
Assume for contradiction that $T$ is not a dominating tree, that is, there is 
a new vertex $g_i$ that is not dominated by $T$ which implies that 
$T$ does not cover $g_i$ - contradiction. 
\end{proof}

\section{\ProblemStar{}}
In this section we consider the \ProblemStar{} problem, 
that is the \Problem{} problem when restricted to stars, i.e.
given an undirected weighted graph $(G, w)$ find a 
minimum weight dominating sub-star (a tree with diameter at most 2).
We start by showing that this problem cannot be approximated 
within $c\log n$ for some $c > 0$ unless $P = NP$.
Then we show how to reduce the problem to a set cover instance to achieve a 
$O(\log n)$-approximation algorithm.  
	\subsection{Hardness}
We show an approximation preserving reduction from the \ProblemDom{} problem (\ProbDom)
to \ProblemStar{}.
Given an (unweighted) instance of \ProbDom{} $G = (V, E)$ we create an instance of \ProbStar{}
$(G', w)$ where $G' = (L \cup R \cup {c}, E' \cup \{c\} \times L$).
\begin{description}
\item[L] - $\{v_l : v \in V\}$
\item[R] - $\{v_r : v \in V\}$
\item[E'] - $\{u_lv_r : uv \in E\}$
\end{description}
We also set $w(e) = \infty$ for every $e \in E'$ and $w(cv_l) = 1$ for every $v_l \in L$.
Figure~\ref{fig:star-hardness} depicts this transformation.
Clearly the above transformation can be done in polynomial time,
The following two claims show that this is an approximation preserving reduction:

\begin{figure}
\begin{center}
\begin{tikzpicture}[]
\foreach[count=\i] \a in {0,72,...,288}{
	\node(\i) at(\a:2) {\i};
}

\foreach \u \v in {
	1/2,2/3,3/4,4/1,4/5,3/1%
}{
	\draw (\u) -- (\v);
}

\node(c) at(4,0) {c};

\begin{scope}[xshift=6cm, yshift=-3cm]
\foreach \i in {1,...,5}{
	\node(l\i) at(0,\i) {\i};
	\node(r\i) at(2,\i) {\i};
}

\foreach \u \v in {
	1/2,2/3,3/4,4/1,4/5,3/1%
}{
	\draw (l\u) -- (r\v);
	\draw (r\u) -- (l\v);
}

\foreach \i in {1,...,5}{
	\draw[dashed] (c) -- (l\i);
}
\end{scope}
\end{tikzpicture}
\end{center}
\caption{\label{fig:star-hardness}
From left to right:
a) An instance of \ProbDom{} (unweighted graph)
b) The corresponding instance of \ProbStar{}, 
the weight on the original edges (solid black) is infinity 
and the weight on the new edges is 1. 
}
\end{figure}
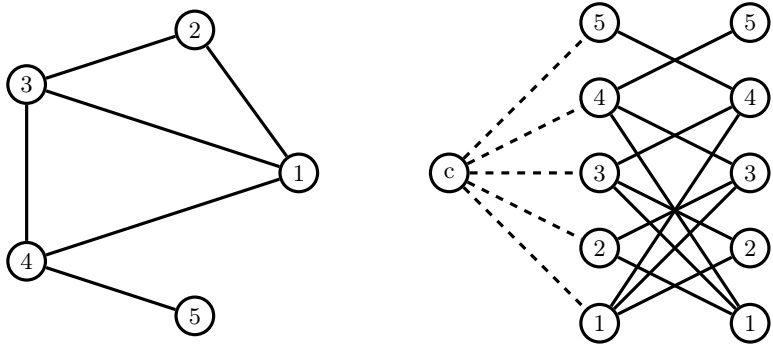 

\begin{claim}
If $D$ is a dominating set in $G$ then 
$S = (\{c\} \cup \{v_l : v \in D\}, \{cv_l: v \in D\})$ 
is a dominating star in $G'$ that weigh $|D|$. 
\end{claim}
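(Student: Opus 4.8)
The plan is to verify the two defining properties of a dominating star directly from the construction. First I would check feasibility, i.e.\ that $S$ dominates every vertex of $G'$. The vertices of $G'$ fall into three classes: the apex $c$, the left copies $L = \{v_l : v \in V\}$, and the right copies $R = \{v_r : v \in V\}$. The apex $c$ and every $v_l$ with $v \in D$ lie in $S$ itself, so they are trivially dominated. A left vertex $v_l$ with $v \notin D$ is adjacent to $c$ via the edge $cv_l$, and $c \in S$, so $v_l$ is dominated. The only real content is the right side: take any $u_r \in R$. Since $D$ is a dominating set of $G$, either $u \in D$ or $u$ has a neighbor $v \in D$ in $G$; in the first case $u_l \in S$ and $u_l u_r$ is an edge of $E'$ (as $uu \in E$ would require a self-loop, so actually I should use the neighbor case), and in the general case there is $v \in D$ with $uv \in E$, hence $v_l u_r \in E'$ and $v_l \in S$, so $u_r$ is dominated. (A small point to be careful about: if $G$ has no isolated vertices then every $u$ has a neighbor in $D$, so I would phrase the argument purely through the ``has a neighbor in $D$'' case; if one wants to allow isolated vertices, the reduction should be read as having $\ProbDom$ instances without isolated vertices, which is WLOG.)

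Next I would check that $S$ is in fact a star, i.e.\ a tree of diameter at most $2$. By construction $S = (\{c\} \cup \{v_l : v \in D\}, \{cv_l : v \in D\})$ is exactly the graph consisting of the center $c$ joined to each leaf $v_l$, $v \in D$; this is connected, acyclic, and every pair of leaves is at distance $2$ through $c$, so it is a star. It is also a legitimate subgraph of $G'$ because each edge $cv_l$ belongs to the edge set $\{c\}\times L$ of $G'$.

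Finally, the weight: every edge used is of the form $cv_l$ with $v \in D$, and $w(cv_l) = 1$, while there are exactly $|D|$ such edges, so $w(S) = |D|$.

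The argument is entirely routine; the only subtlety — and the single place I would slow down — is the domination of the right-hand vertices $u_r$, because this is where the hypothesis that $D$ is a dominating set of $G$ is actually used, and one must make sure the edge witnessing domination, $v_l u_r$ with $uv \in E$ and $v \in D$, is present in $E'$ by its very definition. Everything else (connectivity, acyclicity, diameter, weight count) is immediate from the explicit description of $S$.
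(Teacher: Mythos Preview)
Your argument follows the same route as the paper's (verify domination and compute the weight directly from the construction), only spelled out more carefully with an explicit case split over $c$, $L$, and $R$; the paper's proof is essentially the two-line contrapositive of your right-side case.

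One correction to your parenthetical, though: the claim ``if $G$ has no isolated vertices then every $u$ has a neighbour in $D$'' is false --- take $G$ a path on three vertices $a,b,c$ with $D=\{b\}$; then $b$ has no neighbour in $D$, and indeed $b_r$ is left undominated by $S$. You were right to flag this as the only delicate spot, and it is in fact a gap in the construction as literally written (since $E'=\{u_l v_r : uv\in E\}$ contains no edge $u_l u_r$), but ``no isolated vertices'' does not patch it. The standard repair is to also add the edges $\{u_l u_r : u\in V\}$ to $E'$ (equivalently, treat each vertex as adjacent to itself when forming $E'$); once that is done, your case analysis goes through verbatim.
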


\begin{proof}
$S$ weights $D$ by the definition of $G'$ and $S'$.
Assume for contradiction that it is not a dominating star and let $v$ be a non dominated 
vertex then $v$ is also not dominated in $G$ under $D$ - contradiction.
\end{proof}

\begin{claim}
If $S = (U, F)$ is a dominating star in $G'$ of weight $k < \infty$ 
then $\{v : v_l \in U \setminus \{c\}\}$ is a dominating set in $G$ of size $k$.
\end{claim}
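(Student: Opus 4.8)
The plan is to use the infinite edge weights to force the shape of $S$ and then read off a dominating set of $G$ from its leaves. First I would nail down the structure of $S$: since every edge of $E'$ has weight $\infty$ and $S$ has finite weight, $F$ contains no edge of $E'$, so every edge in $F$ has the form $cv_l$. A single vertex cannot dominate $G'$ when $V\neq\emptyset$ (no vertex is adjacent to all of $R$, since $c$ has no neighbour in $R$ and $L$ is independent in $G'$, while a vertex of $R$ fails to dominate $c$), and the case $V=\emptyset$ is trivial; hence $F\neq\emptyset$, and since all edges of $F$ are incident to $c$ this forces $c\in U$. Connectivity of $S$ then rules out any vertex of $R$ from $U$ and forces $F=\{\,cv_l : v_l\in U\cap L\,\}$, so $S$ is exactly the star $\bigl(\{c\}\cup\{v_l : v\in D\},\,\{cv_l : v\in D\}\bigr)$ with $D:=\{v : v_l\in U\setminus\{c\}\}$; counting edges then gives $k=|F|=|D|$.

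Next I would verify that $D$ dominates $G$. Fix $v\in V\setminus D$; then $v_r\in R$, so $v_r\notin U$, and feasibility of the dominating star $S$ forces $v_r$ to have a neighbour in $U$. But the neighbours of $v_r$ in $G'$ are exactly the vertices $u_l$ with $uv\in E$, so some $u_l\in U$ satisfies $uv\in E$; that is, $u\in D$ is a neighbour of $v$ in $G$. Hence every vertex of $V$ outside $D$ has a neighbour in $D$, so $D$ is a dominating set, of size $k$ by the first step.

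I expect the only mildly delicate point to be the bookkeeping in the first step --- arguing that ``finite weight, star, and only $c$-incident edges available'' really pins $S$ down to a star centred at $c$ whose leaf set has size exactly $k$, and disposing of the degenerate empty-tree case cleanly. Everything afterwards follows immediately from the adjacency structure of $G'$, and together with the preceding claim this gives the desired approximation-preserving reduction.
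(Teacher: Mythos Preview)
Your proof is correct and follows the same approach as the paper's: first use the finite-weight hypothesis to force the star to be centred at $c$ with all leaves in $L$, then read off a dominating set of $G$ from those leaves via the adjacency of $R$ in $G'$. The paper's version is considerably terser (it simply observes that a finite-weight star with more than three vertices must be centred at $c$, so its leaves dominate $R$ and hence correspond to a dominating set), whereas you spell out the degenerate single-vertex and empty-graph cases and explicitly verify $|D|=k$; but the underlying argument is identical.
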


\begin{proof}
Observe that any star that contains more than 3 vertices must be centered at $c$ or otherwise
its weight is infinity.
Thus the leaf of the star dominate all vertices in $R$, by construction, this mean that the
corresponding vertices in the original graph dominate all other vertices.
\end{proof}
	\subsection{$\log n$-Approximation}
We now show how to reduce \ProbStar{} to an instance of the \ProblemSetCover{} problem
(\ProbSetCover{}) in order to obtain a $O(\log n)$-approximation algorithm.
This is the best one can hope for if $P \neq NP$.

Without loss of generality, we assume that the center of the dominating star is known,
or otherwise we can solve the problem for every vertex in the graph assuming it is 
the center.  
Given an undirected weighted graph $(G, w)$ and a the center of the dominating star
$c \in V$ we create the following an instance of \ProbSetCover{} 
$(U, \mathcal{S}, w')$ as follow:
\begin{description}
\item[$U$] - $V \setminus N(c)$
\item[$\mathcal{S}$] - $\{S_v : cv \in E\}$
\item[$S_v$] - $N(v) \cap U$
\item[$w'$] - $w'(S_v) = w(cv)$ 
\end{description}
Clearly the above transformation can be done in polynomial time.
The following two claims show that this is an approximation preserving reduction:

\begin{claim}
If $S = (c, L)$ is a dominating star in $G$ then 
$\mathcal{C} = \{S_v : v \in L\}$ is a set cover in $(U, \mathcal{S}, w')$,
moreover, $w(S) = w'(\mathcal{C})$.  
\end{claim}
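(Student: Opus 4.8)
The plan is to verify both assertions by directly unfolding the definitions of $U$, $\mathcal{S}$, $S_v$, and $w'$; there is nothing deep here, and the only point that needs a moment's care is the universe element $c$ itself.

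For the covering part I would take an arbitrary $u \in U = V \setminus N(c)$ and exhibit a leaf $v \in L$ with $u \in S_v$. Since $S = (c, L)$ is a dominating star, $u$ is either one of its vertices, i.e. $u \in \{c\} \cup L$, or $u$ is adjacent to one of them. First I rule out that $u$ is adjacent to $c$ or that $u$ is a leaf: both would force $u \in N(c)$ (the leaves of a star centered at $c$ lie in $N(c)$), contradicting $u \in U$. Hence either $u = c$, or $u$ is adjacent to some leaf $v \in L$. In the latter case $u \in N(v)$ and $u \in U$, so $u \in N(v) \cap U = S_v$ with $S_v \in \mathcal{C}$, as required. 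In the former case any leaf $v \in L$ has $c \in N(v)$ (again because $v \in N(c)$) and $c \in U$, so $c \in S_v \in \mathcal{C}$; this covers $c$ as long as $L \neq \emptyset$, which we may assume (if $L = \emptyset$ the star is the single vertex $c$, a degenerate case). Thus $\mathcal{C}$ covers all of $U$.

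For the cost identity I would note that the edge set of the star $S = (c, L)$ is exactly $\{cv : v \in L\}$, so $w(S) = \sum_{v \in L} w(cv)$. On the other hand $w'(\mathcal{C}) = \sum_{v \in L} w'(S_v)$, and $w'(S_v) = w(cv)$ by the definition of $w'$; summing yields $w'(\mathcal{C}) = \sum_{v \in L} w(cv) = w(S)$. One should observe here that the sets $S_v$ are labelled by their leaves $v$, so $\mathcal{C}$ genuinely has $|L|$ members and no term of the sum collapses because two $S_v$ happen to coincide as vertex sets.

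The main (and essentially only) obstacle worth flagging is the treatment of $c$: unlike every other element of $U$ it is not covered because the star dominates it through an edge, but simply because $c$ is adjacent to every leaf of the star. Making this explicit is exactly what keeps the reduction correct, and the companion claim — that a set cover yields a dominating star of equal weight — will rely on the symmetric observation.
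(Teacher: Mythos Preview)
Your proof is correct and follows essentially the same direct-verification approach as the paper, which argues in one line by contradiction that any uncovered element of $U$ would be undominated by $S$. You are in fact more careful than the paper: you explicitly treat the element $u = c$ (which the paper's argument glosses over) and you flag the multiset subtlety in the cost identity.
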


\begin{proof}
$w(S) = w'(\mathcal{C})$ by construction.
Now, assume for contradiction that $\mathcal{C}$ is not a set cover and let $v$
be an uncover element then $v$ is also not dominated by $S$ - contradiction.
\end{proof}

\begin{claim}
If $\mathcal{C}$ is a set cover in $(U, \mathcal{S}, w')$ 
then $S = (c, \{v : S_v \in \mathcal{C}\})$ is a dominating star in $G$,
moreover, $w(S) = w'(\mathcal{C})$.  
\end{claim}

\begin{proof}
$w(S) = w'(\mathcal{C})$ by construction.
Now, assume for contradiction that $S$ is not a dominating star and let $v$
be an undominated vertex then $v$ is also uncovered by $\mathcal{C}$ - contradiction.
\end{proof}

\section{\ProblemPath{}}
We show that the \ProblemPath{} problem (\ProbPath{}) cannot be approximated at all
unless $P = NP$.
We show a reduction from the \ProblemHam{} problem (\ProbHam{}).
In \ProbHam{} we are given an undirected graph $G = (V, E)$ and we are 
asked to decide if there is an Hamiltonian path 
(a simple path traversing all the vertices in $V$) in $G$ or not.
\ProbHam{} is one of the classical NP-hard problems.

Given an instance of \ProbHam{}, $G = (V, E)$, we define an instance of \ProbPath{},
$(G', w)$, where $G' = (V \cup \{v' : v \in V\}, E \cup \{vv' : v \in V\})$.
We set $w(e) = 0$ for every edge $e \in E$ and set $w(e) = \infty$ otherwise.
Figure~\ref{fig:hamiltonian} depicts this transformation.  
We now claim that any (multiplicative) approximation algorithm for \ProbPath{}
can solve \ProbHam{}.
Let $G$ be an instance of (decision problem) \ProbHam{}, 
and denote by $A(G', w)$ the value of (approximation) algorithm, $A$, 
on the corresponded \ProbPath{} instance, then:

\begin{claim}
$A(G', w) = 0 \iff G \in \ProbHam{}$. 
\end{claim}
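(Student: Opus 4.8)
The plan is to reduce the statement to the purely combinatorial claim that the optimal value of the \ProbPath{} instance $(G', w)$ equals $0$ if and only if $G$ has a Hamiltonian path, and then to invoke the approximation guarantee of $A$. Since every edge of $G'$ has weight either $0$ or $\infty$, every sub-path of $G'$ has weight in $\{0, \infty\}$; in particular the optimum is $0$ or $\infty$ (with the convention that the optimum is $\infty$ when no dominating path exists at all). If the optimum is $0$, then because $A$ is a multiplicative $\alpha$-approximation we get $0 \le A(G', w) \le \alpha \cdot 0 = 0$, so $A(G', w) = 0$; if the optimum is $\infty$, then every feasible solution, hence the one output by $A$, has weight $\infty \ne 0$. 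So it suffices to prove that $G$ has a Hamiltonian path if and only if $(G', w)$ admits a dominating path of weight $0$.

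For the forward direction I would take a Hamiltonian path $P = v_1 v_2 \cdots v_n$ of $G$. All its edges lie in $E$, so $w(P) = 0$ and $P$ is a legal path in $G'$. It remains to check that $P$ dominates $V(G') = V \cup \{v' : v \in V\}$: every $v \in V$ lies on $P$, and every pendant vertex $v'$ has the neighbour $v \in V(P)$ in $G'$, so every vertex of $G'$ is on $P$ or adjacent to it. Hence $P$ is a dominating path of weight $0$.

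For the converse I would start from a dominating path $P$ of $G'$ with $w(P) = 0$. Every edge of $P$ must have weight $0$, so $P$ uses only edges of $E$, and since those join two vertices of $V$, the whole of $P$ lies inside $V$. Now fix $v \in V$ and look at the pendant vertex $v'$: its unique neighbour in $G'$ is $v$, and $v' \notin V(P)$ because $V(P) \subseteq V$, so the only way $P$ can dominate $v'$ is to contain $v$. Thus $V \subseteq V(P)$, and as $P$ is a simple path contained in the $n$-vertex set $V$, it visits each vertex of $V$ exactly once, i.e.\ $P$ is a Hamiltonian path of $G$.

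The only genuinely delicate point is the interface between the notion of a \emph{multiplicative} approximation and an optimum of $0$: one has to argue that an $\alpha$-approximation must report exactly $0$ when the optimum is $0$ and something strictly positive (here $\infty$) otherwise, which is precisely what turns the test ``$A(G', w) = 0$?'' into a decision procedure for \ProbHam{}. Everything else is the routine structural bookkeeping above, whose two load-bearing observations are that the zero-weight edges of $G'$ are exactly the edges of $G$, and that the pendant copies $v'$ force any zero-weight dominating path to pass through every vertex of $V$.
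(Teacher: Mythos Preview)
Your proof is correct and follows essentially the same approach as the paper: both argue that a weight-$0$ dominating path in $G'$ must use only edges of $E$ and, via the pendant vertices $v'$, must visit every vertex of $V$ (hence be Hamiltonian), and conversely that a Hamiltonian path of $G$ is a weight-$0$ dominating path in $G'$, so a multiplicative approximation algorithm returns $0$ exactly when $G$ has a Hamiltonian path. Your write-up is in fact more careful than the paper's own proof, spelling out explicitly why $V(P)\subseteq V$ and why the approximation must return a strictly positive value when the optimum is $\infty$.
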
 

\begin{proof}
Let $P$ be a dominating path in $(G', w)$ with value 0, 
then it uses only edges of $E$, moreover $P$ is an hamiltonian path in $G$ or otherwise
there is a vertex $v'$ that is not dominated by $P$.
Now, let $P$ be an hamiltonian path in $G$ then $P$ is a dominating path in $G'$ with value
0, thus, any (multiplicative) approximation algorithm must also find a path value 0.
\end{proof}

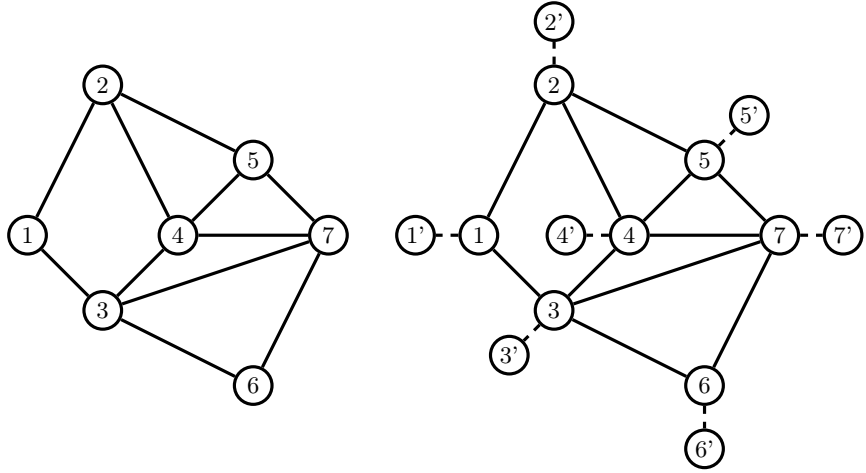
\begin{figure}
\begin{center}
\begin{tikzpicture}[]
\foreach[count=\i] \x \y in {
	0/0
	,1/2,1/-1
	,2/0
	,3/1,3/-2
	,4/0
}{
	\node(\i) at(\x,\y) {\i};
}

\foreach \u \v in {
	1/2,1/3%
	,2/4,2/5%
	,3/4,3/6,3/7%
	,4/5,4/7%
	,5/7%
	,6/7%
}{
	\draw (\u) -- (\v);
}

\begin{scope}[xshift=6cm]
\foreach[count=\i] \x \y in {
	0/0
	,1/2,1/-1
	,2/0
	,3/1,3/-2
	,4/0
}{
	\node(\i) at(\x,\y) {\i};
}

\foreach \u \v in {
	1/2,1/3%
	,2/4,2/5%
	,3/4,3/6,3/7%
	,4/5,4/7%
	,5/7%
	,6/7%
}{
	\draw (\u) -- (\v);
}

\foreach[count=\i] \p in {
	left,above,below left,left,above right,below,right%
}{
	\node(\i')[\p=3mm of \i] {\i'};
	\draw[dashed] (\i) -- (\i');
}
\end{scope}
\end{tikzpicture}
\end{center}
\caption{\label{fig:hamiltonian}
From left to right
a) An instance of the \ProblemHam{} problem.
b) The corresponded \ProblemPath{} instance, 
original edges have zero weight, dashed edges have infinite weight.
}
\end{figure} 

\section{Conclusion}
Our main result shows that on general graphs any approximation algorithm for
the \Problem{} problem yields the same approximation result for the 
\ProblemGroup{} problem and vice versa.
This result might give another perspective and maybe shed some light on
the \ProblemGroup{} problem.

We remark, however, that the two problems are not equivalent.
A good example is when the input graph is a tree, \ProbGroup{} is known to 
be as hard to approximate as \ProbSetCover{} even in this case while \Prob{} is 
trivially solvable on trees.
Thus, there is also a place to studying each of the problems on its own for particular 
families of graphs.

\bibliographystyle{plain}
\bibliography{main}

\end{document}